\documentclass[12pt,a4paper]{article}
  \topmargin=-1.0cm
\textheight=23.40cm
\textwidth=17.0cm
\evensidemargin=-0.5cm
\oddsidemargin=-0.5cm
\baselineskip=16pt
\tolerance=8000
\parskip=5pt
\usepackage{latexsym,amsmath,amsfonts,amssymb,amsthm,amscd,mathrsfs}

\usepackage{accents}


 \usepackage[unicode,colorlinks,plainpages=false,hyperindex=true,bookmarksnumbered=true,bookmarksopen=false,pdfpagelabels]{hyperref}
 \hypersetup{urlcolor=cyan,linkcolor=blue,citecolor=red,colorlinks=true}

\newcommand{\be}{\begin{equation}}
\newcommand{\ee}{\end{equation}}
\newcommand{\bea}{\begin{eqnarray}}
\newcommand{\eea}{\end{eqnarray}}

\numberwithin{equation}{section}
\newcounter{thmcounter}
\numberwithin{thmcounter}{section}

\theoremstyle{definition}
\newtheorem*{acknowledgements}{Acknowledgements}

\newtheorem{remark}[thmcounter]{Remark}
\theoremstyle{plain}

\newtheorem{lemma}[thmcounter]{Lemma}
\newtheorem{proposition}[thmcounter]{Proposition}
\newtheorem{theorem}[thmcounter]{Theorem}

\def\1{{\boldsymbol 1}}                     %
\def\C{\mathbb{C}}                          %
\def\N{\mathbb{N}}                          %
\def\R{\mathbb{R}}                          %
\def\red{\mathrm{red}}                      %
\def\id{{\mathrm{id}}}                      %
\def\dt {\left.\frac{d}{dt}\right|_{t=0}}   %
\def\fM{\mathfrak{M}}                       %
\def\cG{{\mathcal G}}                       %
\def\half{\frac{1}{2}}                      %
\def\tr{\mathrm{tr\,}}                      %
\def\red{\mathrm{red}}                      %
\def\cE{{\mathcal E}}                       %
\def\cB{{\mathcal B}}                       %
\def\cA{{\mathcal A}}                       %
\def\gl{\mathrm{gl}}                        %
\def\GL{\mathrm{GL}}                        %

\vspace*{0.5cm}
\begin{document}

\bigskip
\begin{center}
{\Large\bf
A note on quadratic Poisson brackets on $\gl(n,\R)$
related to Toda lattices}
\end{center}

\medskip
\begin{center}
L.~Feh\'er${}^{a,b,}$\footnote{Corresponding author, e-mail: lfeher@physx.u-szeged.hu} and B. Juh\'asz${}^a$  \\

\bigskip
${}^a$Department of Theoretical Physics, University of Szeged\\
Tisza Lajos krt 84-86, H-6720 Szeged, Hungary\\

\medskip
${}^b$Department of Theoretical Physics, WIGNER RCP, RMKI\\
H-1525 Budapest, P.O.B.~49, Hungary\\
\end{center}

\medskip
\begin{abstract}
It is well known that the compatible linear and quadratic Poisson brackets
of the full symmetric and of the standard open Toda lattices are restrictions of
linear and quadratic $r$-matrix Poisson brackets on the
associative algebra $\gl(n,\R)$.
 We here show that the quadratic bracket on
$\gl(n,\R)$,
corresponding to the $r$-matrix defined by the splitting of $\gl(n,\R)$
into the direct sum of the upper triangular and orthogonal Lie subalgebras, descends by Poisson reduction
from a quadratic Poisson structure on the cotangent bundle $T^* \GL(n,\R)$.
This complements the interpretation of the linear $r$-matrix
bracket as a reduction of the canonical Poisson bracket of the cotangent bundle.
\end{abstract}

\newpage

\section{Introduction}

The goal of this brief communication is to illuminate the group theoretic  origin of a certain
quadratic $r$-matrix structure on the associative algebra $\cG:= \gl(n,\R)$.
This Poisson structure is associated with  the QU factorization
and it appeared in the theory of integrable systems \cite{LP,OR}.
Like the corresponding linear $r$-matrix bracket, it can be restricted to
the Poisson submanifolds consisting of symmetric and of tridiagonal symmetric matrices \cite{OR},
thereby producing the bi-Hamiltonian structures of the full symmetric and of the usual (open)
Toda lattices \cite{DLNT,SurB}.
It is well known (see e.g. \cite{STS2}) that the linear
$r$-matrix bracket on $\cG$ is a reduction of the canonical Poisson bracket
of the cotangent bundle of the group $G:= \GL(n,\R)$.
Our observation is that $T^*G$ carries also a quadratic Poisson bracket that
descends to the relevant quadratic bracket on $\cG$ via the same reduction
procedure which works in the linear case.
The idea arises from \cite{FeAHP,FeNlin}, where bi-Hamiltonian structures
for spin Sutherland models were obtained by reducing bi-Hamiltonian structures
 on the cotangent bundle of $\GL(n,\C)$.

 We now recall the necessary background information about linear and quadratic
 $r$-matrix Poisson brackets on $\cG$.
 This is a specialization of general results
found in \cite{LP,OR} (see also \cite{SurPLA}).
Let $R$ be a linear operator on $\cG$ that solves the modified classical Yang-Baxter
equation\footnote{For reviews on $r$-matrices and their use, one may consult, for example,  \cite{STS2,SurB}.}.
Decompose $R$ as the sum of its anti-symmetric and symmetric parts, $R_a$ and $R_s$, with
respect to the non-degenerate bilinear form,
\be
\langle X,Y \rangle  := \tr(XY), \quad \forall X,Y\in \cG,
\label{I1}\ee
and suppose that $R_a$ solves the same equation as $R$.
For a smooth real function on $\cG$ let $df$ denote its gradient defined using the trace form \eqref{I1},
and introduce the  `left- and right-derivatives' $\nabla f$ and $\nabla'f$  by
\be
\nabla f(L):= L df(L), \qquad \nabla' f(L):= df(L) L.
\label{I2}\ee
Then the following formula defines a Poisson bracket on $\cG$:
\be
\{f,h\}_2 := \langle \nabla f, R_a \nabla h \rangle -  \langle \nabla' f, R_a \nabla' h \rangle
+ \langle \nabla f, R_s \nabla' h \rangle - \langle \nabla' f, R_s \nabla h \rangle.
\label{I3}\ee
The  Lie derivative of this quadratic $r$-matrix bracket along the  vector field $V(L):= \1_n$
is the linear $r$-matrix bracket,
\be
\{ f, h \}_1(L) = \langle L, [Rdf(L),dh(L)] + [ df(L), R dh(L)]\rangle,
\label{I4}\ee
and thus the two Poisson brackets are compatible.  The Hamiltonians $h_k(L) := \frac{1}{k} \tr(L^k)$
are in involution with respect to both brackets.  They enjoy
the relation
\be
\{ f, h_k\}_2 = \{f, h_{k+1}\}_1, \qquad \forall f\in C^\infty(\cG),
\label{I5}\ee
and their Hamiltonian vector fields engender
 bi-Hamiltonian Lax equations:
\be
\partial_{t_k}(L) := \{ L, h_k\}_2 = \{ L, h_{k+1}\}_1  = [R(L^k), L],
\qquad \forall k\in \N.
\label{I6}\ee

Turning to the example of our interest, let us decompose any $X\in \cG$ as
\be
X= X_> + X_0 + X_<,
\label{I7}\ee
where $X_>$, $X_0$ and $X_<$ are the strictly upper triangular, diagonal and strictly lower triangular parts of the matrix $X$,
respectively.
Denote $\cA < \cG$ the Lie subalgebra of skew-symmetric matrices and $\cB < \cG$ the
subalgebra of upper triangular matrices.
They enter the vector space direct sum
\be
\cG = \cA + \cB,
\label{I8}\ee
and, using the projections $\pi_\cA$ onto $\cA$ and $\pi_\cB$ onto $\cB$,
yield the $r$-matrix
\be
R= \half (\pi_\cB - \pi_\cA).
\label{I9}\ee
In terms of the triangular decomposition \eqref{I7},
\be
\pi_\cA(X) = X_< - (X_<)^T, \quad
\pi_\cB(X) = X_> + X_0 + (X_<)^T,
\label{I10}\ee
and
\be
R(X) = \half (X_> + X_0- X_<)  + (X_<)^T,
\,\,
R_a(X)=\half (X_> - X_<),
\,\,
R_s(X) =\half X_0 +  (X_<)^T.
\label{I11}\ee
This $r$-matrix $R$ satisfies the conditions stipulated above, and
we are going to derive its quadratic bracket \eqref{I3} by reduction of a Poisson structure on $T^*G$.

\begin{remark}
The matrix space ${\mathrm{mat}}(n\times n, \R)$ is primarily an associative algebra,  and the notation
$\gl(n,\R)$ is usually reserved for its induced Lie algebra structure.
In this paper $\gl(n,\R)$ is understood to carry
both algebraic structures, i.e., we identify $\gl(n,\R)$ with  ${\mathrm{mat}}(n\times n,\R)$ when
using the associative product.
 This should not lead to any confusion.
\end{remark}

\section{The $r$-matrix brackets from Poisson reduction}

We start with the manifold
\be
\fM:= G \times \cG = \{ (g,L)\mid g\in G, \, L\in \cG\},
\label{T1}\ee
which is to be viewed as a model of $T^*G$ obtained via right-translations
and the identification $\cG^* \simeq \cG$ given by the trace form.
For smooth real functions $F, H\in C^\infty(\fM)$,
the following formulae define two compatible Poisson brackets:
\be
\{ F,H\}_1(g,L) =   \langle \nabla_1 F, d_2 H\rangle - \langle \nabla_1 H, d_2 F \rangle +
\langle L, [d_2 F, d_2 H]\rangle,
\label{PB1}\ee
and
\bea
&& \{ F, H\}_2(g,L)  =
\langle R_a \nabla_1 F, \nabla_1 H \rangle - \langle R_a \nabla'_1 F, \nabla'_1 H \rangle
 +\langle \nabla_2 F - \nabla_2' F, r_+\nabla_2' H  - r_- \nabla_2 H
   \rangle \nonumber \\
 && \qquad\qquad\qquad \quad
 +\langle \nabla_1 F,  r_+ \nabla_2' H  - r_-\nabla_2 H  \rangle
- \langle \nabla_1 H,  r_+ \nabla_2' F  - r_- \nabla_2 F \rangle,
\label{PB2}\eea
where
\be
r_\pm := R_a \pm \half \id.
\label{T4}\ee
The derivatives are taken at $(g,L)$, 
\be
\langle \nabla_1 F(g,L), X\rangle = \dt F(e^{tX} g, L),
\quad
\langle \nabla_1' F(g,L), X\rangle = \dt F(ge^{tX}, L), \quad \forall X\in \cG,
\label{T5}\ee
and $\nabla_2 F(g,L) := L d_2 F(g,L)$, $\nabla_2' F(g,L) := d_2F(g,L) L$ with $d_2 F$ denoting
the gradient with respect to the second argument. The first bracket is just the canonical one.
The  second one is obtained by a change of variables from the Heisenberg double \cite{STS1} of
the Poisson--Lie group $G$  equipped with the Sklyanin bracket that appears in the first two terms of \eqref{PB2}.
In the corresponding complex holomorphic case, this is explained in detail in \cite{FeAHP}.
The compatibility of the two brackets also follows by the same Lie derivative argument that works in the complex case \cite{FeAHP}.

We are interested in the  restriction of the Poisson brackets \eqref{PB1} and \eqref{PB2} to
those functions on $\fM$ that are invariant with respect to the  group
\be
S:= A \times B \quad \hbox{with}\quad A:= {\mathrm{O}}(n,\R),\,\, B:= \exp(\cB),
\label{T6}\ee
whose factors correspond to the Lie algebras $\cA$ and $\cB$ in \eqref{I8}.  That is,
$B$ consists of the upper triangular elements of $G$ having positive diagonal entries.
The action of $S$ on $\fM$ is given by letting any $(a,b)\in A \times B$
act on $(g,L)\in \fM$ by the diffeomorphism
\be
(g,L) \mapsto (a g b^{-1}, a L a^{-1}).
\label{T7}\ee
Due to the QU factorization\footnote{That is, due to the fact that the matrix multiplication $m: A \times B \to G$
is a diffeomorphism.}, every $S$ orbit in $\fM$ admits a unique representative of
the form $(\1_n,L)$. Therefore, we may associate to any smooth, $S$ invariant functions $F,H$
on $\fM$ unique smooth functions $f,h$ on $\cG$ according to the rule
\be
f(L):= F(\1_n,L), \quad h(L):= H(\1_n,L).
\label{T8}\ee
Provided that the invariant functions close under the Poisson brackets on $\fM$, we
may define the reduced Poisson brackets on $C^\infty(\cG)$ by setting
\be
\{ f,h\}_i^\red (L) := \{ F, H\}_i(\1_n,L), \qquad i=1,2.
\label{T9}\ee
In other words,
in this situation the Poisson brackets descend to the quotient space $\fM/S \simeq \cG$.
The closure is obvious for the first Poisson bracket, and for the second one we prove it below.

\begin{proposition} If $F$ and $H$ are invariant with respect to the $S$ action \eqref{T7}, then
their second Poisson bracket \eqref{PB2} takes the simplified form
\be
2\{ F, H\}_2  =
 \langle \nabla_2 F ,\nabla_2' H \rangle
 - \langle \nabla_2 H ,\nabla_2' F \rangle
 + \langle \nabla_1 F,  \nabla_2' H  +\nabla_2 H  \rangle
- \langle \nabla_1 H,   \nabla_2' F + \nabla_2 F \rangle.
\label{PB2LR}\ee
This formula implies that the Poisson bracket of two $S$ invariant functions is again $S$ invariant.
\end{proposition}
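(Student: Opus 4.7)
My plan is to turn the $S$-invariance of $F$ and $H$ into explicit linear constraints on their gradients, and then show by direct algebraic manipulation that \eqref{PB2} collapses to \eqref{PB2LR} at every point $(g,L)\in \fM$. The first step will be to extract these constraints. Differentiating $F(e^{tX}g,\, e^{tX} L e^{-tX}) = F(g,L)$ in $X\in\cA$ yields $\nabla_1 F + \nabla_2 F - \nabla_2' F \in \cA^\perp$; since $\cA$ is the space of skew-symmetric matrices, its orthogonal complement under the trace form is the space of symmetric matrices. Analogously, differentiating $F(ge^{-tY}, L) = F(g,L)$ in $Y\in\cB$ yields $\nabla_1' F \in \cB^\perp$, the space of strictly lower triangular matrices. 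The same two relations hold for $H$.

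Next I will collect some elementary facts that follow from the explicit formula \eqref{I11}: (i) $R_a(X) = -\tfrac12 X$ when $X$ is strictly lower triangular, and (ii) $R_a$ sends symmetric matrices to skew-symmetric ones. Together with the facts that $R_a$ is skew with respect to $\langle\cdot,\cdot\rangle$, that the trace form vanishes on any pairing of a symmetric against a skew matrix, and that it vanishes on any product of two strictly lower triangular matrices, these provide all the ingredients needed.

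The main computation will go as follows. I substitute $r_\pm = R_a \pm \tfrac12 \id$ into \eqref{PB2} and split into a $\tfrac12\id$-contribution and an $R_a$-contribution. The $\tfrac12\id$-contribution already reproduces one-half of the RHS of \eqref{PB2LR}, after one uses the cyclic-trace identity $\langle \nabla_2 F,\nabla_2 H\rangle = \langle \nabla_2' F,\nabla_2' H\rangle$ to rewrite the middle term $\langle \nabla_2 F - \nabla_2' F,\,\nabla_2 H + \nabla_2' H\rangle$ as $\langle \nabla_2 F, \nabla_2' H\rangle - \langle \nabla_2 H, \nabla_2' F\rangle$. The claim therefore reduces to showing that the $R_a$-contribution vanishes. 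Writing $S_F := \nabla_1 F + \nabla_2 F - \nabla_2' F$ (symmetric, by the argument above) and similarly $S_H$, I use $\nabla_2 F - \nabla_2' F = S_F - \nabla_1 F$ to eliminate the mixed gradients. After reorganising, the skewness of $R_a$ makes the pair $\langle R_a\nabla_1 F,\nabla_1 H\rangle$ cancel against its rearrangement; the piece $\langle R_a\nabla_1' F,\nabla_1' H\rangle$ vanishes by (i) and the trace-form identity for strictly lower triangular matrices; and the residual $\langle S_F, R_a S_H\rangle$ vanishes by (ii). The main obstacle is the sign bookkeeping in this rearrangement, since the cancellations only become transparent once everything is expressed in the variables $S_F, S_H, \nabla_1 F, \nabla_1 H, \nabla_1' F, \nabla_1' H$.

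Finally, the second claim of the proposition will follow directly from the simplified formula \eqref{PB2LR}. Under \eqref{T7}, each of $\nabla_1 F, \nabla_2 F, \nabla_2' F$ (and their $H$ counterparts) transforms only by conjugation by $a\in A$, and the $B$-factor acts trivially on them. Since $\langle\cdot,\cdot\rangle$ is $\mathrm{Ad}$-invariant, every pairing on the right of \eqref{PB2LR} is $S$-invariant, and hence so is $\{F,H\}_2$.
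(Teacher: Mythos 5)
Your argument is correct and follows essentially the same route as the paper: extract the two invariance conditions on the gradients, split $r_\pm$ into its $R_a$ and $\pm\half\id$ parts (the $\half\id$ part giving the right-hand side of \eqref{PB2LR} after the cyclic-trace identity), and kill the $R_a$-contribution using the skewness of $R_a$ together with the fact that it exchanges $\cA$ and $\cA^\perp$ — your symmetric combination $S_F=\nabla_1F+\nabla_2F-\nabla_2'F$ is just a compact repackaging of the paper's condition \eqref{T14} and the cancellations \eqref{S1}--\eqref{S2}. One small slip: $\cB^\perp$ with respect to the trace form is the space of \emph{strictly upper} triangular matrices, not strictly lower ones, so $R_a$ acts on $\nabla_1'F$ as $+\half\id$ rather than $-\half\id$; this does not affect your conclusion, since the trace form also vanishes on products of two strictly upper triangular matrices, so $\langle R_a\nabla_1'F,\nabla_1'H\rangle=0$ either way.
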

\begin{proof}
The invariance of $F$ with respect to the action of one parameter subgroups of $A$ and $B$ leads to
the conditions
\be
\langle \nabla_1' F, X \rangle = 0, \quad \forall X\in \cB
\quad \hbox{and}\quad  \langle \nabla_1 F + \nabla_2 F - \nabla_2'F, Y \rangle =0, \quad \forall Y\in \cA.
\label{T11}\ee
The first condition means that $\nabla_1' F(g,L)$ is strictly upper triangular, and since the same holds for $H$ we get
\be
\langle R_a \nabla_1' F, \nabla_1' H \rangle = 0.
\label{T12}\ee
By using the second condition in \eqref{T11}, we are going to show that the contributions
containing $R_a$ cancel from all other terms of \eqref{PB2} as well.
To do this,  it proves useful to employ the direct sum decomposition
$\cG = \cA + \cA^\perp$,
where $\cA^\perp$ consists of the symmetric matrices in $\cG$.
Accordingly,   we may decompose any element $Z\in \cG$ as
\be
Z = Z^+ + Z^-
\quad \hbox{with}\quad
Z^+\in \cA, \, Z^- \in \cA^\perp.
\label{T13}\ee
Then the second condition in \eqref{T11} means that
\be
(\nabla_1 F)^+ = (\nabla_2' F - \nabla_2 F)^+.
\label{T14}\ee
By using this together with the anti-symmetry of $R_a$ and that $R_a$ maps $\cA$ into $\cA^\perp$ and $\cA^\perp$ into $\cA$,
we derive the equalities,
\be
\langle R_a \nabla_1 F, \nabla_1 H\rangle =
\langle R_a (\nabla_1 H)^-,   (\nabla_2 F - \nabla_2' F)^+ \rangle
-\langle R_a (\nabla_1 F)^-,   (\nabla_2 H - \nabla_2' H)^+ \rangle,
\label{S1}\ee
and
\be
\langle \nabla_1 F,  R_a (\nabla_2' H  - \nabla_2 H)  \rangle
= \langle R_a (\nabla_1 F)^-,   (\nabla_2 H - \nabla_2' H)^+ \rangle
+ \langle R_a (\nabla_2' F - \nabla_2 F)^+,   (\nabla_2 H - \nabla_2' H)^- \rangle.
\label{S2}\ee
By adding up \eqref{S1} and
the terms in \eqref{S2} together with (minus one times) their counterparts having $F$ and $H$ exchanged, one
precisely cancels
$\langle \nabla_2 F - \nabla_2' F, R_a (\nabla_2' H  -  \nabla_2 H) \rangle$ in \eqref{PB2}.
Then the formula \eqref{PB2LR} results directly from \eqref{PB2}.
Having derived \eqref{PB2LR}, one sees that the right-hand side of this expression is invariant under the action \eqref{T7} of $S$.
Indeed, this is a consequence of the fact that the derivatives of invariant functions are equivariant,
meaning for example that we have
\be
\nabla_1 F (a g b^{-1}, a L a^{-1})  =a (\nabla_1 F(g,L)) a^{-1},
\quad
\nabla_2 F(a g b^{-1}, aL a^{-1})= a (\nabla_2 F(g,L)) a^{-1}.
\label{T17}\ee
This and the conjugation invariance of the trace imply the claim.
\end{proof}

The following lemma will be important below.

\begin{lemma}
The $S$ invariant function $F$ on $\fM$  and the function $f$ on $\cG$ related  by \eqref{T8} satisfy the relations
\be
\nabla_1 F(\1_n,L) = (r_+ - R_s)(\nabla' f(L) - \nabla f(L)),
\quad
d_2 F(\1_n,L) = df(L),
\label{rel**}\ee
where $R_s$ and $r_+$ are given by \eqref{I11} and \eqref{T4}.
\end{lemma}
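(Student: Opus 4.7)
My plan is to treat the two identities separately and to use the invariance conditions \eqref{T11} specialized at $g=\1_n$.

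The second identity $d_2 F(\1_n, L) = df(L)$ is immediate: differentiating the relation $f(L) := F(\1_n, L)$ in $L$ produces precisely the partial gradient of $F$ in its second slot at $(\1_n, L)$.

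For the first identity, set $N := \nabla_1 F(\1_n, L)$. At the basepoint $g=\1_n$ the two first-slot gradients coincide, so $N = \nabla_1' F(\1_n, L)$. The first of the invariance conditions \eqref{T11} then forces $N$ to be strictly upper triangular, exactly as used in the proof of the preceding proposition. The second condition of \eqref{T11}, recast as \eqref{T14}, gives the antisymmetric-part identity $N^+ = (\nabla_2' F - \nabla_2 F)^+$ evaluated at $(\1_n, L)$. Since $\nabla_2 F(\1_n, L) = L\,df(L) = \nabla f(L)$ and $\nabla_2' F(\1_n, L) = df(L)\,L = \nabla' f(L)$, writing $M := \nabla' f(L) - \nabla f(L)$ this becomes simply $N^+ = M^+$. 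These two constraints — strict upper triangularity of $N$ together with $N^+ = M^+$ — determine $N$ uniquely.

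What remains is to check that $(r_+ - R_s)(M)$ is exactly this $N$, which is a short triangular-decomposition calculation. Matching the strictly upper parts in $N - N^T = M - M^T$ with $N$ strictly upper yields $N = M_> - M_<^T$. On the operator side, \eqref{I11} and \eqref{T4} give $r_+(X) = R_a(X) + \half X = X_> + \half X_0$, hence $(r_+ - R_s)(X) = X_> + \half X_0 - \half X_0 - X_<^T = X_> - X_<^T$; applying this to $M$ reproduces $M_> - M_<^T = N$, as required. I do not expect any serious obstacle: the only subtle point is keeping straight the two direct-sum decompositions $\cG = \cA \oplus \cA^\perp$ and $Z = Z_> + Z_0 + Z_<$ at the basepoint $g=\1_n$, but once these are aligned the verification is mechanical.
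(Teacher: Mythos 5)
Your proof is correct and follows essentially the same route as the paper: both use the identity $\nabla_1 F(\1_n,L)=\nabla_1'F(\1_n,L)$, the first condition of \eqref{T11} to get strict upper triangularity, the condition \eqref{T14} to pin down the antisymmetric part, and the explicit formula $(r_+-R_s)(X)=X_>-(X_<)^T$. The only cosmetic difference is your framing via uniqueness of a strictly upper triangular matrix with prescribed antisymmetric part, which is a valid (and slightly more structured) way of organizing the same computation.
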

\begin{proof}
The second relation is obvious, and it implies the identities $\nabla_2 F(\1_n,L) = \nabla f(L)$ and
$\nabla'_2 F(\1_n,L) = \nabla' f(L)$.
Since $\nabla_1 F(\1_n,L) = \nabla'_1 F(\1_n,L)$ by \eqref{T5}, we see from \eqref{T11} that
\be
\nabla_1 F(\1_n,L) = (\nabla_1 F(\1_n,L))_>,
\label{T19}\ee
where we applied the triangular decomposition \eqref{I7}.
Then, noting that the anti-symmetric part of any $X\in \cG$ is $X^+ = \frac{1}{2} ( X - X^T)$,
it follows from the equality \eqref{T14} that
\be
(\nabla_1 F(\1_n,L))_> = 2 ((\nabla_1 F(\1_n,L))^+)_> = (\nabla' f(L) - \nabla f(L))_> -  ((\nabla' f(L) - \nabla f(L))_<)^T.
\label{T20}\ee
Because  $r_+ X = X_> + \frac{1}{2} X_0$ and $R_s X = \frac{1}{2} X_0 + (X_<)^T$ by \eqref{I11} and \eqref{T4},
the statement \eqref{rel**} is obtained by combining \eqref{T19} and \eqref{T20}.
 \end{proof}

We now prove our claim about the reduction origin of the quadratic bracket
\eqref{I3}, which we could not find  in the literature.
For completeness, we also show that the linear $r$-matrix  bracket \eqref{I4} descends from \eqref{PB1},
which is a classical result \cite{STS2}.

\begin{theorem} The reductions \eqref{T9} of the  Poisson brackets \eqref{PB1} and \eqref{PB2} on the cotangent bundle
$\fM\equiv T^*\GL(n,\R) $ \eqref{T1}
defined by taking quotient by the action \eqref{T7} of the group $S$ \eqref{T6}
give the linear \eqref{I4} and quadratic \eqref{I3} $r$-matrix brackets on $\cG = \gl(n,\R)$, respectively.
\end{theorem}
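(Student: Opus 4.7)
My plan is to substitute the Lemma's identifications at $(\1_n, L)$ into the two bracket formulas and reduce them to \eqref{I4} and \eqref{I3}. The key algebraic tool I would exploit is the adjoint relation
\begin{equation*}
(r_+ - R_s)^* = \tfrac{1}{2}\id - R,
\end{equation*}
with respect to the trace form, which follows from $r_+^* = -r_-$ together with $R_a^* = -R_a$ and $R_s^* = R_s$.

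For the linear reduction, I would substitute the Lemma directly into \eqref{PB1} at $(g,L) = (\1_n, L)$ to get three terms. Applying the adjoint identity moves $(r_+ - R_s)$ off the factors $\nabla' f - \nabla f$ and $\nabla' h - \nabla h$, splitting each term into a $\tfrac{1}{2}\id$ piece and an $-R$ piece. Using the cyclic-trace identity $\langle \nabla' f - \nabla f, X\rangle = \langle L, [X, df]\rangle$, the two $\tfrac{1}{2}\id$ pieces assemble to $-\langle L, [df, dh]\rangle$ and exactly cancel the canonical commutator term in \eqref{PB1}. The remaining $-R$ pieces reassemble as $\langle L, [Rdf, dh] + [df, Rdh]\rangle$, which is precisely \eqref{I4}.

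For the quadratic reduction, since $F$ and $H$ are $S$-invariant I can begin from the simplified expression \eqref{PB2LR} of the Proposition. Substituting the Lemma and again transferring $(r_+ - R_s)$ across via the adjoint identity, the $\tfrac{1}{2}\id$ contribution cancels the first pair of terms $\langle \nabla f, \nabla' h\rangle - \langle \nabla h, \nabla' f\rangle$ once one invokes the cyclic identity $\langle \nabla f, \nabla h\rangle = \langle \nabla' f, \nabla' h\rangle$. The remaining $-R = -R_a - R_s$ contribution produces eight inner products which I would reduce by using $R_a^* = -R_a$ (cross-terms mixing $\nabla$ and $\nabla'$ cancel while same-type pairs double) and $R_s^* = R_s$ (the opposite cancellation pattern), leaving exactly the four terms of \eqref{I3} after dividing by $2$.

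I expect the main obstacle to be the careful bookkeeping of those eight inner products in the quadratic step, where the opposite (anti)symmetry rules for $R_a$ and $R_s$ produce complementary cancellation patterns; the linear reduction is considerably shorter. Both computations are elementary once the Lemma and the Proposition are in hand, so the content of the theorem really lies in the preparatory results rather than in a new calculation here.
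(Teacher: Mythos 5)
Your proposal is correct and follows essentially the same route as the paper: substitute the Lemma's relations \eqref{rel**} into \eqref{PB2LR} and \eqref{PB1}, split $r_+ = R_a + \half\id$, and exploit $R_a^* = -R_a$, $R_s^* = R_s$ so that the $\half\id$ pieces cancel the commutator (resp.\ the $\langle\nabla f,\nabla'h\rangle$) terms while the cross-terms in the $R_a$ and $R_s$ contributions cancel and the same-type pairs double. Your phrasing via the single adjoint identity $(r_+ - R_s)^* = \half\id - R$ is just a repackaging of the paper's term-by-term manipulation, not a different argument.
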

\begin{proof}
We have to evaluate the expressions \eqref{T9} for $f$ and $h$ related to the $S$ invariant functions $F$ and $H$ by \eqref{T8}.
We start with the second bracket, relying on \eqref{PB2LR}.
Substitution of the relations \eqref{rel**} into \eqref{PB2LR} gives
\be
\langle \nabla_1 F, \nabla'_2 H + \nabla_2 H\rangle - \cE(F,H) = \langle r_+(\nabla' f - \nabla f) - R_s (\nabla' f - \nabla f), \nabla' h + \nabla h\rangle
- \cE(f,h),
\ee
where $\cE(F,H)$ stands for the terms obtained by  exchanging the roles of $F$ and $H$, and similarly for $f$ and $h$.
Writing $r_+ = R_a + \half \id$, we find
\be
\half \langle \nabla' f - \nabla f , \nabla' h + \nabla h\rangle - \cE(f,h) =
\langle \nabla' f ,\nabla h \rangle
 -\langle \nabla f ,\nabla' h \rangle.
 \ee
The terms containing $R_a$ and $R_s$ contribute
\be
\langle R_a(\nabla' f - \nabla f) , \nabla' h + \nabla h\rangle - \cE(f,h) =
2 \langle R_a \nabla' f ,\nabla' h \rangle
 -2 \langle R_a\nabla f ,\nabla h \rangle,
\ee
and
\be
 \langle R_s (\nabla f - \nabla' f) , \nabla' h + \nabla h\rangle - \cE(f,h) =
2 \langle R_s \nabla f ,\nabla' h \rangle
 -2 \langle R_s\nabla' f ,\nabla h \rangle.
\ee
Plugging these identities into \eqref{PB2LR}, we obtain the  result
\be
\{f,h\}_2^\red =
  \langle \nabla f ,R_a\nabla h \rangle - \langle  \nabla' f ,R_a \nabla' h \rangle
   + \langle \nabla f ,R_s \nabla' h \rangle
 - \langle \nabla' f ,R_s \nabla h \rangle,
\ee
which reproduces the quadratic $r$-matrix bracket \eqref{I3}.

To continue, we evaluate \eqref{T9} for $i=1$.
Substitution  of \eqref{rel**} now gives, at the appropriate arguments,
\be
\langle \nabla_1 F, d_2 H \rangle = \langle (R_a + \frac{1}{2}\id - R_s)[df,L], dh\rangle =
\langle L, [ df, R dh] \rangle - \frac{1}{2} [df, dh]\rangle.
\ee
Here, $R= R_a + R_s$ and we used the standard invariance properties of the trace form \eqref{I1}.
Consequently, we get
\be
\langle \nabla_1 F, d_2 H \rangle - \langle \nabla_1 H, d_2 F \rangle + \langle L, [d_2 F, d_2 H]\rangle
= \langle L, [R df, dh] + [df, R dh]\rangle.
\ee
The right-hand side gives $\{f,h\}_1^\red$, which coincides with the linear $r$-matrix bracket \eqref{I4}.
\end{proof}

\begin{remark}
Let us recall \cite{LP,OR} that $\cG$ carries also a cubic $r$-matrix Poisson bracket which is compatible with the linear and quadratic ones.
It can be obtained from the linear bracket by performing the densely defined change of variables $L \mapsto L^{-1}$, and then extending
the result to the full of $\cG$. For completeness, we  note that the same change of variables is applicable on $T^*G$, too, and the so-obtained
Poisson bracket then leads to the cubic bracket on $\cG$ by the reduction procedure described in the above.
\end{remark}

\section{Discussion}

We explained that the quadratic $r$-matrix bracket \eqref{I3} of the `generalized Toda hierarchy'
\eqref{I6} on $\gl(n,\R)$ is a reduction of a quadratic Poisson bracket  on $T^* \GL(n,\R)$.
This observation escaped previous attention, probably because the convenient form \eqref{PB2} of the relevant parent Poisson bracket
came to light only recently \cite{FeAHP}.
The integrability of the system \eqref{I6} was thoroughly studied
in \cite{DLT} (see also \cite{LP}), together with two other related hierarchies.
These are of the form \eqref{I6}, but instead of $R$ \eqref{I11} use either $R'$ given by
$R'(X) := \frac{1}{2} (X_> + X_0 - X_<)$
or $R'':= R_a$ (which gives the anti-symmetric part of $R'$, too).
We can show that the quadratic $r$-matrix brackets  obtained from \eqref{I3}
by replacing $R$ with $R'$ or $R''$ are  also reductions of the  bracket \eqref{PB2} on $\fM$,
similarly to how the linear $r$-matrix brackets descend \cite{STS2} from \eqref{PB1}.
In the case of $R'$ one may use the group $S':= A' \times B$, where $A'$ is the exponential of the strictly lower
triangular subalgebra of $\cG$. In the case of $R''$ the reduction group is $S''< (G\times G)$ having elements of the form
$(a,b) = (e^{X_0} e^{X_<},  e^{-X_0} e^{X_>})$
which act in the same way as \eqref{T7}.  (The notation refers to \eqref{I7} with arbitrary $X\in \cG$.)
To be precise, in these cases  one  needs to restrict the starting system to $T^* \check G$, where the leading
principal minors  of the elements of $\check G$ are positive, otherwise the reduction procedure is identical to
the presented case, even the crucial equations
\eqref{PB2LR} and \eqref{rel**} keep their form for the corresponding invariant functions.
The open Toda phase space is well known \cite{SurB} to be a Poisson submanifold with respect to the linear $r$-matrix brackets
for any of $R$, $R'$ and $R''$.
However, in contrast to the case of $R$ \eqref{I11}, it is not a Poisson submanifold
with respect to the quadratic brackets associated with $R'$ and $R''$.
It would be interesting to find
the reduction origin of the modified quadratic $r$-matrix brackets of Suris \cite{SurPLA,SurB} that are free from this difficulty.
Another open problem is to extend our treatment
of the quadratic brackets  to spectral parameter dependent $r$-matrices.

 \bigskip
\bigskip
\begin{acknowledgements}
We wish to thank Maxime Fairon for  useful remarks on the manuscript.
This work was supported in part by the NKFIH research grant K134946.
\end{acknowledgements}


\end{document}